\newtheorem{theorem}{Theorem}
\newtheorem{lemma}[theorem]{Lemma}
\newenvironment{proof}[1][Proof]{\noindent\textbf{#1.} }{\ \rule{0.5em}{0.5em}}
\begin{document}

\title{Independent Attacks in Imperfect Settings; A Case for A Two Way Quantum Key Distribution Scheme}
\author{J.S. Shaari$^{a}$ and Iskandar Bahari$^{b}$}
\address{\smallskip $^{a}${\textit{Faculty of Science, International Islamic University of Malaysia (IIUM), Jalan Bukit Istana, 25200 Kuantan, Pahang, Malaysia}} 
\\
$^{b}${\textit{Information Security Cluster, MIMOS Berhad, Technology Park Malaysia, 57000 Kuala Lumpur, Malaysia}}
}
\begin{abstract}
We review the study on a two way quantum key distribution protocol given imperfect settings through a simple analysis of a toy model and show that it can outperform a BB84 setup. We provide the sufficient condition for this as a ratio of optimal intensities for the protocols.
\end{abstract}

\maketitle
\section{Introduction}

\noindent The study of quantum key distribution (QKD) \cite{gisin} has seen much development since its debut in the seminal work of Bennett and Brassard (BB84)\cite{bb84}, where a new framework in secure communications based on physical laws was introduced. Security analysis which was initially  more confined to studies within a theoretical framework was later extended to consider imperfections in a realistic setup \cite{lutk}. While the former sees theoretical challenges where the (sometimes perfect) legitimate users, Alice and Bob are pit against an adversary, Eve who has perfect technological advantage, the latter addresses imperfections of the users, e.g. using weak coherent pulses instead of a single photon source (which immediately opens them to attacks like the notorious photon number splitting attacks (PNS) \cite{lutj,hut} given lossy channels). Limitations on Eve is also not uncommonly studied, e.g. in \cite{lutk} where Eve is limited to independent attacks and \cite{bech} where she is robbed of a quantum memory. 

While prepare and measure QKD schemes like BB84 are exhaustively studied, other families of QKD have received less treatment partly due to their more complicated nature. One such family is the `two way QKD schemes' which arguably began with the Ping Pong protocol \cite{bostroem} and was later followed by nonentangled versions reported in \cite{cai,deng,LM,LM05}. Essentially all shared the particular feature where Bob would send a qubit to Alice (forward path) who would encode using a flip (passive) operator to flip (retain) the state received. The qubit would then be returned to Bob (backward path) who would make a sharp measurement to deduce Alice's operation. As Alice's operation defines the encoding of the qubits, an Eve wishing to glean information must necessarily attack both the forward as well as the backward path. Security is ensured by virtue of a control mode where Alice would randomly make a measurement instead and results would be compared on a public channel later to ascertain errors. Details of this may be found in \cite{deng,LM}. It is important to note that complete security analysis on such protocols has never been done though available calculations seem to suggest a higher level of robustness compared to BB84. Recently, a study on cloning unitary transformations in \cite{chiri} seems to support this case. For the purpose of being specific, we will refer more often than not to LM05, which was the name given in \cite{LM05} to the protocol \cite{LM}. Like BB84, a treatment for LM05 in terms of an imperfect source was done in \cite{marco}. These works have suggested the robustness of LM05 over BB84 against PNS \footnote{detailed discussions along this line were in private communications with Lucamarini}. The results then \cite{marco} exhibited this at least for certain short and medium distances. However we feel there are some pertinent issues related to this protocol that have yet to receive rigorous highlight of which is our intent in this letter.

The outline of our work is as follows. We first propose a toy model for such a two way protocol; i.e. a model protocol which essentially mimics the LM05. The model in some sense would distill only the most essential features of a two way QKD scheme and is inherently simpler to analyze. We should emphasize the point, that we \textbf{are not} proposing a new protocol. We then adopt the formalism for the optimal attack as described in \cite{bech} where Eve interacts with Alice/ Bob's qubit using a two dimensional ancilla and measures independently in the forward as well as the backward path. Let us note that such an attack is quite sufficient in terms of an individual attack on the BB84. Subsequently, we prove a simple theorem where the strength of interaction (attack) in the backward path should be equal to the one in the forward path for Eve's benefit. We note that this attack is partially inspired by the `optimal incoherent attack' in \cite{LM}. 
We proceed to consider the case for a non-ideal Alice/ Bob where they operate with a lossy channel as well as an imperfect photon source. The main objective of the toy model is to show that, in the face of these imperfections, its proper merit (and more importantly that of two way QKD) is really in its two way nature rather than Bob's `deterministic' measurement\footnote{by deterministic we refer to Bob's measurement basis which always coincide with the basis the qubit was prepared in \cite{jes}}. We present the secure key rate which interestingly enough outperforms BB84 at all distances. 

\section{A Two Way QKD Scheme; A Toy Model}
\noindent Let us first describe our model. We imagine a protocol, like LM05 where Bob sends to Alice a qubit prepared in one of four states $\left|i\pm\right\rangle$ where $i=x,y$ of two preferred basis ($Y$ and $X$). Alice then measures in either of the two basis chosen randomly; which projects it to an eigenstate of her measurement operator. The measured qubit, say $\left|j\pm\right\rangle$, where $j=x,y$ then is subjected to a unitary transformation, $I$ or $Z$ as follows;
\begin{eqnarray}
I\left|j\pm\right\rangle\rightarrow\left|j\pm\right\rangle~~~,~~~~Z\left|j\pm\right\rangle\rightarrow\left|j\mp\right\rangle
\end{eqnarray}
to retain or to flip the qubit respectively before resending to Bob. Bob then proceeds to measure in the basis he originally prepared the qubit in. At the end, they should reveal the basis over a public channel and only the cases where they share the same basis would Alice and Bob share the information of what unitary transformation was carried out; i.e. $\mathcal{A}+\mathcal{B}\mod{2}$ where  $\mathcal{A}$ and $\mathcal{B}$ are the bit values resulting from Alice's and Bob's measurements. 

The model protocol is essentially LM05 with the added feature that Alice always measure before her unitary transformation and at first glance may seem simply as a two way channel derived from two BB84 `put back-to-back'. However it is important to note that the encoding is really derived from the sum of the bit content obtained from Bob's and Alice's stations. This fact forces Eve to attack both paths. One may argue that the transformation is rather spurious in that Alice could very well just prepare a state identical to her measurement outcome (or an orthogonal one) to resend; in fact in realistic situations, a qubit (usually a photon) is absorbed by a detector in a measurement. Nevertheless, we prefer to retain this model for the sake of having the projected qubit treated on. We will note later the merits of such an assertion. Another point worth mentioning is Alice's measurements in a certain sense allows for a detection of Eve, identical to the control mode in LM05 and does not require us to view errors in control mode as opposed to encoding mode separately (which is necessarily the case for LM05; the double CNOT attack for example in LM05 would induce errors in the control mode only \cite{arxivrev}). In terms of efficiency though, it is immediately half that of LM05 due to the random choice of measurement basis. This should be analogous to having 50\% control mode in LM05.   

\subsection{Independent Attack; A Two Dimensional Ancilla}
\noindent We now analyze the model (for the sake of brevity, we hereafter refer to as ToM), under an independent attack using the formalism based on \cite{bech} where Eve introduces her ancilla to interact with Alice/ Bob's qubit and measures thereafter. This is what is referred to as the `optimal' attack in \cite{bech}, where the interaction between her ancilla and the travelling qubit be written in the $Z$ basis as 
\begin{eqnarray}
\left|0\right\rangle\left|0_E\right\rangle\rightarrow \left|0\right\rangle\left|0_{E}\right\rangle\\\nonumber
\left|1\right\rangle\left|0_E\right\rangle\rightarrow \cos{\alpha}\left|1\right\rangle\left|10_{E}\right\rangle+\sin{\alpha}\left|0\right\rangle\left|01_{E}\right\rangle
\end{eqnarray}
where we consider $\alpha\in [0,\pi/2]$. The formalism considers Alice and Bob using the $X$ and $Y$ bases, thus immediately ensuring that the errors denoted by Alice and Bob would be the same in both bases and the state fidelities for Alice/ Bob and Eve would be $\left(1+\cos{\alpha}\right)/2$ and $\left(1+\sin{\alpha}\right)/2$ respectively \cite{bech}. These fidelities quoted are true since Eve has knowledge of basis (akin to a BB84 basis revelation). This is one of the niceties of our toy model as it allows for a direct import of such an attack's formalism. Without a measurement made by Alice, the qubit would be in an entangled state and further operations would be very messy and becomes somewhat complicated. In consideration of ToM, Eve attacks in the forward path by allowing her ancilla to interact with the qubit sent by Bob to Alice.  A second attack is launched on the backward path assuming a fresh ancilla with similar unitary action (except that an angle $\beta$ is used instead). It is easy to see, as noted in \cite{LM} that Eve would be able to make the correct guess of Alice's encoding only in two possible instances; when she guesses the states in the forward and backward path correctly as well as when she guesses wrongly in both the paths. Thus we note that the fidelity of guessing the transformation correctly is given by 
\begin{eqnarray}
F_{E}&=&\dfrac{\left(1+\sin{\alpha}\right)}{2}\dfrac{\left(1+\sin{\beta}\right)}{2}+\dfrac{\left(1-\sin{\alpha}\right)}{2}\dfrac{\left(1-\sin{\beta}\right)}{2}\\\nonumber
&=&\dfrac{\left(1+\sin{\alpha}\sin{\beta}\right)}{2}.
\end{eqnarray}
In consideration of the information shared between Alice and Bob, a bit is shared only with probability
\begin{eqnarray}
F_{AB}&=&\dfrac{\left(1+\cos{\alpha}\right)}{2}\dfrac{\left(1+\cos{\beta}\right)}{2}+\dfrac{\left(1-\cos{\alpha}\right)}{2}\dfrac{\left(1-\cos{\beta}\right)}{2}
\\\nonumber
&=&\dfrac{\left(1+\cos{\alpha}\cos{\beta}\right)}{2}
\end{eqnarray}  
and the probability for an erroneous bit would be $1-F_{AB}$. 
A proper choice for the pair $\left(\alpha,\beta\right)$ should be made to ensure that Eve's fidelity is maximized for any given disturbance experienced by Bob. We propose the following; given the `independent ancilla based attack' as defined above, Eve achieves the highest fidelity when  $\alpha=\beta$. In order to prove this, we begin with the following lemma.

\begin{lemma}
For any pair  $\left(\alpha,\beta\right),\alpha,\beta\in[0,\pi/2]$, $\exists \Phi$ such that $\cos{\alpha}\cos{\beta}=\cos^2{\Phi}$
\end{lemma}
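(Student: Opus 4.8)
The plan is to exhibit such a $\Phi$ directly, treating the claim as an elementary range/continuity statement rather than a trigonometric identity. First I would observe that for $\alpha,\beta\in[0,\pi/2]$ both $\cos\alpha$ and $\cos\beta$ lie in $[0,1]$, so their product $\cos\alpha\cos\beta$ also lies in $[0,1]$. This is the only quantitative fact needed; everything else is about matching this interval to the image of $\cos^{2}$.

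Next I would note that the map $g(\Phi)=\cos^{2}\Phi$ is continuous on $[0,\pi/2]$ with $g(0)=1$ and $g(\pi/2)=0$, hence by the intermediate value theorem it attains every value in $[0,1]$; in particular there is a $\Phi\in[0,\pi/2]$ with $g(\Phi)=\cos\alpha\cos\beta$. Equivalently, and more concretely, one can simply set $\Phi=\arccos\!\bigl(\sqrt{\cos\alpha\cos\beta}\,\bigr)$, which is well defined precisely because $\cos\alpha\cos\beta\in[0,1]$ guarantees the square root is real and its value lies in the domain $[0,1]$ of $\arccos$; a one-line check then gives $\cos^{2}\Phi=\cos\alpha\cos\beta$.

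There is essentially no obstacle here: the only thing to verify is the inclusion $\cos\alpha\cos\beta\in[0,1]$, which pins down that the target value is in the range of $\cos^{2}$ over $[0,\pi/2]$; no case distinctions or identities are required. I would close by remarking that the $\Phi$ produced can be taken in $[0,\pi/2]$, which is the convenient normalization for the later step where this lemma is used to rewrite $F_{AB}=(1+\cos\alpha\cos\beta)/2$ as $(1+\cos^{2}\Phi)/2$.
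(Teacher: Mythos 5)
Your proposal is correct and follows essentially the same route as the paper: both reduce the claim to the observation that $\cos\alpha\cos\beta\in[0,1]$ and then solve $\cos\Phi=\sqrt{\cos\alpha\cos\beta}$ for $\Phi\in[0,\pi/2]$, which is exactly your $\Phi=\arccos\bigl(\sqrt{\cos\alpha\cos\beta}\bigr)$. The added intermediate-value-theorem remark is a harmless alternative phrasing of the same fact.
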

\begin{proof}
As $\alpha,\beta\in[0,\pi/2]$, therefore $0\leq \cos{\alpha}\cos{\beta}\leq 1$. It becomes immediate to see one may solve the following equality $\sqrt{\cos{\alpha}\cos{\beta}}-\cos{\Phi}=0$ for $\Phi\in[0,\pi/2]$.
\end{proof}
\begin{theorem}
Given a two way QKD protocol (ToM), an independent ancilla based attack using a two dimensional ancilla sees Eve achieving the highest fidelity when  $\alpha=\beta$ $\forall \alpha,\beta\in[0,\pi/2]$.
\end{theorem}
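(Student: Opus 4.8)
The plan is to recast the claim as a constrained optimisation. For a fixed disturbance seen by Bob, which by the expression for $F_{AB}$ is governed entirely by the product $\cos\alpha\cos\beta$, one wants to maximise Eve's fidelity $F_E=(1+\sin\alpha\sin\beta)/2$, an increasing function of $\sin\alpha\sin\beta$. So it suffices to show that among all pairs $(\alpha,\beta)\in[0,\pi/2]^2$ with $\cos\alpha\cos\beta$ held equal to a prescribed value, the quantity $\sin\alpha\sin\beta$ attains its maximum precisely on the diagonal $\alpha=\beta$. Using the Lemma I would write the prescribed value as $\cos^2\Phi$ for a unique $\Phi\in[0,\pi/2]$, so that the target becomes: the maximiser is $\alpha=\beta=\Phi$.

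First I would clear the square roots. Since every sine and cosine appearing here is nonnegative on $[0,\pi/2]$, maximising $\sin\alpha\sin\beta$ is equivalent to maximising $\sin^2\alpha\,\sin^2\beta=(1-\cos^2\alpha)(1-\cos^2\beta)$. Setting $u=\cos^2\alpha$ and $v=\cos^2\beta$, the constraint $\cos\alpha\cos\beta=\cos^2\Phi$ becomes $uv=\cos^4\Phi=:k$, a fixed constant, and the objective becomes $(1-u)(1-v)=1-(u+v)+uv=1-(u+v)+k$. Thus, with $uv=k$ pinned, maximising the objective is the same as minimising the sum $u+v$.

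The remaining step is the elementary inequality: by the arithmetic--geometric mean inequality, $u+v\ge 2\sqrt{uv}=2\sqrt{k}$, with equality if and only if $u=v$. On $[0,\pi/2]$ the map $\alpha\mapsto\cos^2\alpha$ is injective, so $u=v$ forces $\alpha=\beta$, and then the common value satisfies $\cos^2\alpha=\cos^2\Phi$, i.e.\ $\alpha=\beta=\Phi$. Substituting back, $F_E$ is maximal exactly there; since the disturbance (equivalently, $k$) was arbitrary, this holds along every admissible level set, which is the assertion of the theorem.

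I do not expect a serious technical obstacle: the whole argument is a reparametrisation followed by AM--GM. The one point that needs care is the logical framing — making explicit that the relevant notion of optimality is ``Eve's best choice at a given disturbance'', so the theorem is the statement that the diagonal is optimal along each level set of $\cos\alpha\cos\beta$, not a global unconstrained maximum (which would trivially be $\alpha=\beta=\pi/2$). Invoking the Lemma is exactly what guarantees that every such level set actually meets the diagonal, so the optimiser $\alpha=\beta=\Phi$ is always attainable.
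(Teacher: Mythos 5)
Your proof is correct, and its overall framing coincides with the paper's: fix the disturbance, i.e.\ restrict to a level set $\cos\alpha\cos\beta=\cos^2\Phi$ via the Lemma, and show that $\sin\alpha\sin\beta$ is maximised on the diagonal, which suffices because $F_E$ is increasing in that product. Where you diverge is in the closing inequality. The paper uses the identity $\cos\alpha\cos\beta=\cos(\alpha-\beta)-\sin\alpha\sin\beta$, so that $\sin^2\Phi=1-\cos(\alpha-\beta)+\sin\alpha\sin\beta\ge\sin\alpha\sin\beta$ with equality exactly when $\cos(\alpha-\beta)=1$; the diagonal condition $\alpha=\beta$ drops out immediately from the equality case. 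You instead square, substitute $u=\cos^2\alpha$, $v=\cos^2\beta$, reduce the problem to minimising $u+v$ subject to $uv$ fixed, and invoke AM--GM. Both routes are one-line elementary facts and both correctly identify $\alpha=\beta=\Phi$ as the unique optimum (your appeal to injectivity of $\cos^2$ on $[0,\pi/2]$ to pass from $u=v$ back to $\alpha=\beta$ is the right care to take). The paper's identity is slightly more direct in that it never leaves the original variables and exhibits the defect $1-\cos(\alpha-\beta)$ explicitly as the quantity Eve loses by choosing $\alpha\neq\beta$; your version makes the constrained-optimisation structure more transparent and generalises mechanically to any objective monotone in $u+v$ along level sets of $uv$. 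Your remark on the logical framing --- that the theorem asserts diagonal optimality along each level set rather than a global unconstrained maximum --- is a point the paper leaves implicit, and it is worth stating.
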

\begin{proof}
Starting with $\cos{\alpha}\cos{\beta}=\cos^2{\Phi}$, Bob's disturbance may be written as 
\begin{eqnarray}
\dfrac{\left(1-\cos{\alpha}\cos{\beta}\right)}{2}=\dfrac{\left(1-\cos^2{\Phi}\right)}{2}=\dfrac{\sin^2{\Phi}}{2}
\end{eqnarray}
and Eve needs to find a pair $(\alpha,\beta)$ that would maximize her fidelity for a given $\Phi$.
Writing $\cos{\alpha}\cos{\beta}=\cos{\left(\alpha-\beta\right)}-\sin{\alpha}\sin{\beta}$ we arrive at the following
\begin{eqnarray}
\dfrac{\sin^2{\Phi}}{2}&=&\dfrac{1-\cos{\left(\alpha-\beta\right)}+\sin{\alpha}\sin{\beta}}{2}\\\nonumber
\sin^2{\Phi}&\geq &\sin{\alpha}\sin{\beta}.
\end{eqnarray}
The above inequality is valid as $1-\cos{\left(\alpha-\beta\right)}\geq 0$ and an equality is achieved when $\alpha=\beta$ and $\alpha=\Phi$. As Eve's fidelity function, 
$F_{E}\left(\alpha,\beta\right)$ is an increasing function of $\sin{\alpha}\sin{\beta}$, $\max\left[F_{E}\left(\alpha,\beta\right)\right]=F_{E}\left(\alpha=\beta\right).$
\end{proof}
\newline
\newline
\noindent In the ensuing discussion, we consider $\alpha=\beta$ and $F_{E}$ and $F_{AB}$ would reduce to   
$F_{E}=\left(1+\sin^2{\alpha}\right)/2$
and $F_{AB}=\left(1+\cos^2{\alpha}\right)/2$ respectively.
We plot below in Figure \ref{fig:f1}, the information curves for Alice-Bob, $I_{AB}$, and Alice-Eve, $I_{ToM}$ under this ancilla based attack. We also include the case where Eve uses the simple intercept resend (IR) attack and as ToM contains a basis revelation step, the IR would be analogous to BB84. In an IR, we may assume that Eve attacks only a fraction, say $x$ of the qubits and her information of Alice's encoding for the attacked fraction would be complete. We see that this is in fact the best strategy for Eve. For comparison purposes, we include the corresponding curve for an optimal BB84 attack. 
\begin{figure}[h]
\center
			\includegraphics[angle=270,width=0.6\textwidth]{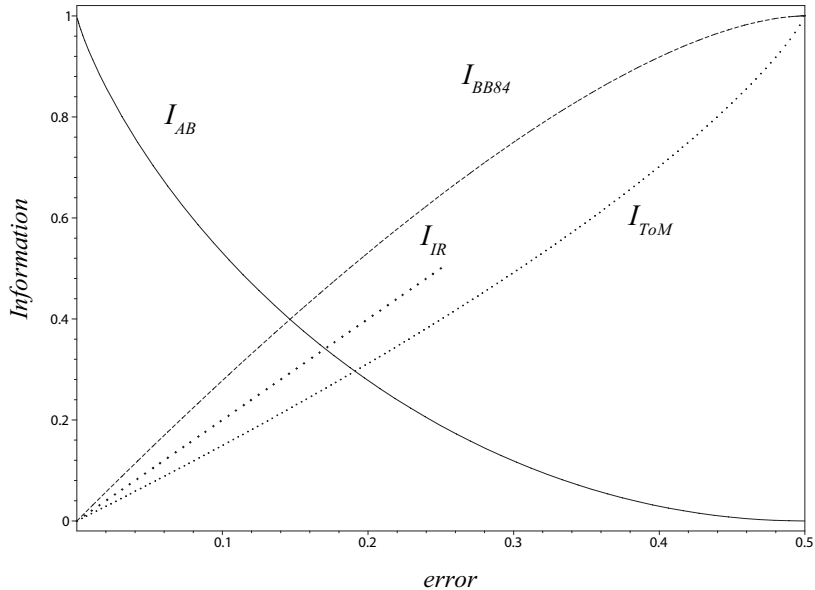}
			\caption{The above shows the information gained by Eve for different attack strategies employed. $I_{ToM}$ is the Alice-Eve mutual information in an ancilla base attack as described in text, $I_{IR}$ is Eve's information using the IR attack and $I_{BB84}$ is Eve's information for an optimal independent eavesdropping in BB84.\label{fig:f1}}
\end{figure}

\section{The Imperfect Source}
\noindent Rather than a perfect single photon source, what Alice and Bob normally employ is in fact a source of which pulses emitted may contain more than one photon (or even zero) and the distribution is given by the Poissonian statistics. The probability to have $n$ photons in a pulse is given by $P_n=\mu^n \exp{-\mu}/n!$. The critical point in this drawback is that it opens Alice and Bob to an attack such as the photon number splitting (PNS) attack \cite{lutj,hut} where Eve may steal a certain number of photons from a multiphoton pulse and make measurements on them while not disturbing the photons measured by Bob (thus not introducing any errors). In a BB84 setup, Lutkenhaus \cite{lutk} has shown that, given an error rate $e$, a secure key can only be generated from single photon contributions at the rate 
\begin{eqnarray}
G=\dfrac{1}{2}p_{b}\left[-f(e)h(e)+\beta_b \left(1-\tau\left(\dfrac{e}{\beta_{b}}\right)\right)\right]
\label{key}
\end{eqnarray} 
where $f(e)$ reflects the error correction efficiency (we take as 1.22 for simplicity) while $h(e)$ is the binary entropic function. The average probability for detection at Bob's is denoted by $p_{b}$ and $\beta_{b}$ essentially reflects the fraction of bits from which a secure key may be extracted. The term $\tau$ is a function for the amount of bits discarded due to privacy amplification and is defined as $\tau (e)=\log_2{1+4e-4e^2}$ for $e<1/2$ and 1 for $e\geq 1/2$.

\subsection{ToM and Imperfect Source}
\noindent In order to consider the study of ToM given an imperfect setup, it is important to highlight our assumptions regarding Alice;
\begin{enumerate}
	\item  Alice \textit{can} actually measure a state and retain the projected state as a photon which she may subject to the next procedure. 
	\item  We further assume that Alice's measurement and instruments are completely efficient in the sense that the transmitivity of her instruments is really unity. 
\end{enumerate}
Despite the fact that the two assumptions above are not realistic given today's technology; we should reiterate that we are \textbf{not} testing a new protocol. ToM is after all, only a toy model with which we hope to highlight the use of the 2 way channel for protocols like LM05. Hence, with regards to the first assumption, we are only interested in the case where Eve would exploit the imperfection of the photon source as well as the lossy channel; other than that we assume Alice's technological \textit{fantasies}. The second assumption lies in the comfort of the fact that in LM05, the key rate does not suffer any imperfection of Alice's measurement in an encoding mode. However for the sake of the analysis of an imperfect source, we insist that Alice ignores/ cannot determine the number of photons in a pulse and in the multiphoton case, her measurement operator should act on all qubits in the pulse. This consequently presents a more pessimistic scenario for ToM.
 
Lucamarini et al. \cite{marco} gave a formula for secure key rate for LM05  similar to eq.(\ref{key}) except for the absence of the `$1/2$' term as well as a different `$\beta$' (which includes double photon contributions). The arguments for a two photon contribution in \cite{marco} carry over to ToM quite straightforwardly. However in ToM, while Eve's preferable attack would be the simple IR, the case for a two photon source is different. Eve could very well attack only the backward photon (subsequent to Alice's transformation) after retaining one of photons in the forward path. Once the basis is revealed publicly, she may make a sharp measurement on the hijacked photon and thus her fidelity of Alice's transformation would be perfectly identical to BB84's, $\left(1+\sin{\alpha}\right)/2$ (a similar though then a heuristic justification for formulas used for LM05 was made in \cite{jis}). 
As for 3 photon pulses, unlike \cite{marco}, basis revelation in ToM always allows for conclusive measurements for such pulses. Thus, with $p_t$ as the signal detected at Bob's station, the fraction from which to distill a secure key for ToM is given by
\begin{eqnarray}
\beta_{t}=\dfrac{1}{p_t}\left[p_t-\left(1-e^{-\mu}\sum_{i=0}^2{\dfrac{\mu^i}{i!}}\right)\right].
\end{eqnarray}

\noindent In order to determine how much information is to be discarded in privacy amplification, we need to ascertain the amount of Renyi information Eve may have access to. Let us note in passing that \cite{marco} used the function $\tau$ as defined in \cite{lutk} which we believe to be a very pessimistic estimate.

As a quick refresher, we note that the Renyi entropy of order 2 for a random variable with $n$ outcomes with $p_i$ being the probability for \textit{i-th} outcome is given by \cite{renyi} 
\begin{eqnarray}
H_R=-\log_2{\sum_{i=1}^{n}p_i^2}
\end{eqnarray}
and the Renyi information gain may be given by the difference between the \textit{apriori} and \textit{aposteriori} entropies \cite{how}. 
In a simple IR attack where Eve attacks only a fraction $x$ of the qubits, her \textit{aposteriori} Renyi entropy of Alice's encoding is given by 
\begin{eqnarray}
-\log_2\left[\left(\dfrac{1}{2^2}+\dfrac{1}{2^2}\right)^{1-x}\right]-\dfrac{1}{2}\log_2\left[\left(\dfrac{1}{2^2}+\dfrac{1}{2^2}\right)^{x}\right]-\dfrac{1}{2}\log_2\left[\left(1^2+0^2\right)^{x}\right]\\\nonumber
=-\log_2\left[2^{\left(x/2\right)-1}\right]
\end{eqnarray} 
and her Renyi information gain (with $0.25x=e$)
\begin{eqnarray}
\tau_{t}(e)=1+\log_2\left[2^{\left(x/2\right)-1}\right]=2e
\label{lm05}
\end{eqnarray} 
The Renyi information gain considering Eve's fidelity in the case of a two photon pulse is given by 
\begin{eqnarray}
1+\log_2{\left\{[(1+\sin{\alpha})/2]^2+[(1-\sin{\alpha})/2]^2\right\}}.
\end{eqnarray} 
With the disturbance $e=(1-\cos{\alpha})/2$, the amount of bits to be discarded in privacy amplification is thus $\log_2{\left(1+4e-4e^2\right)}$, which is really the $\tau$ from \cite{lutk}.    
As $\tau(e)>\tau_{t}(e),\forall e\in\left(0,0.5\right)$ (note that $\tau_{t}(e)$ is defined on $e$ only up to 0.25)   and since Alice and Bob cannot really ascertain where the errors are from, a safe choice for the amount of bits to be discarded in privacy amplification would be given by $\tau$. Thus the use of $\tau$ is aptly and rigorously justified for calculating ToM's secure key rate.

We proceed to consider in the following subsection, the secure key rate formula as a function of distance for ToM in a fiber based implementation.

\subsection{Secure Key Rates}
\noindent A fiber based setup has received fair treatment in modeling \cite{lutk,marco,X} and we will follow mostly \cite{X} where the overall gain for an encoded qubit/ photon detected by Bob's measurement as well as the overall QBER would be given respectively by   
\begin{eqnarray}
p=p^{dark}+1-e^{-\mu\eta t}
\label{xp}
\end{eqnarray} 
and
\begin{eqnarray}
e=\dfrac{e_0p^{dark}+e_{det}\left(1-e^{-\mu\eta t}\right)}{p}
\label{qber}
\end{eqnarray}
where $\eta$ includes the transmitivity of internal optical components as well as the efficiency of Bob's detector. The transmitivity, $t=10^{-\gamma l_k}$ where $\gamma$ is coefficient value for fiber loss and $l_k$ is the total distance a photon travels in the fiber ($l_{BB84}=2l_{ToM}$)\cite{marco}. The $p^{dark}$ term reflects the inevitable contribution from the dark counts while $e_0$ represents the error stemming from the dark counts and is given as $0.5$. The optical system's alignment and stability is characterized by $e_{det}$. 

Following the lead of \cite{marco,arxivrev} and given equations (\ref{key},\ref{xp},\ref{qber}), our formula for key rate has the form identical to eq.(\ref{key})
except for the $p_b$ substituted with  $p_t$ and $\beta_b$ with $\beta_t$. As the agreed bit occurs only half the time, our comparison puts BB84 and ToM somewhat on equal footing. We plot in Figure \ref{fig:f2} the secure key rates for ToM and compare them to the estimates for a BB84 with the experimental parameters of the GYS \cite{X,gys} as well as the KTH \cite{X,kth} experiments. The term $p_b$ and error for BB84 is also calculated using equations (12) and (13).

\begin{figure}[hbtp]
	\centering
		\includegraphics[angle=270,width=0.65\textwidth]{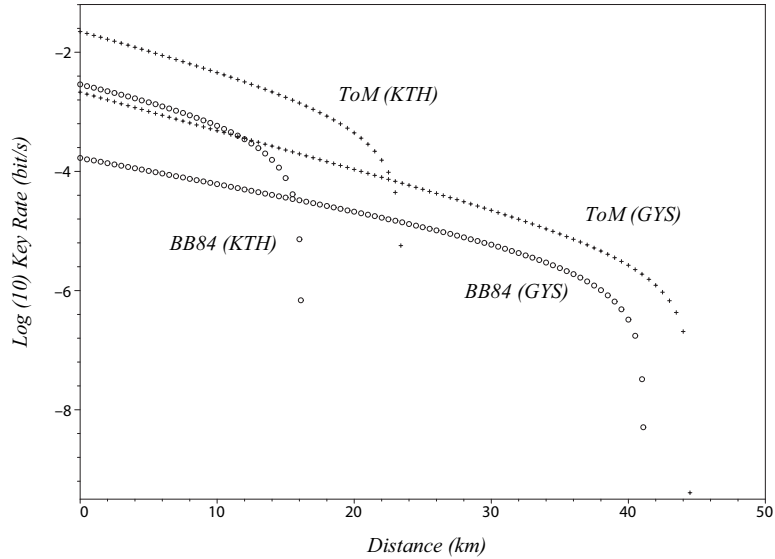}
		\caption{The above exhibits the secure key gain for ToM and BB84 for the parameters based on GYS and KTH. A numerically optimized $\mu$ at each distance for each protocol has been used for the distances plotted.\label{fig:f2}}
\end{figure}

It is thought that the toll of lossy channels on two way schemes should leave its performance somewhat wanting compared to BB84. However, our consideration for ToM here illustrates how it exceeds the key rate as well as performance distance of BB84 (we refer to Figure \ref{fig:f2}). This may be understood as follows. If we consider purely the detection at Bob's station, for equal source intensity, $\mu$, then obviously $p_b>p_{t}$. However, it is critical to remember from \cite{lutk}, the choice for source intensity should be optimal to achieve a maximum key rate for every distance. Writing the optimal intensity for BB84 and ToM as $\mu_{opt}$ and $k\mu_{opt}$ respectively, where $k$ is a constant for a given distant, then $p_b<p_{t}$ when 
\begin{eqnarray}
\exp\left({-\eta\mu_{opt}10^{-\frac{\gamma l}{10}}}\right)&>&\exp\left({-\eta k\mu_{opt}10^{-\frac{\gamma 2l}{10}}}\right)\\\nonumber
k&>&10^{\frac{\gamma l}{10}}.
\label{ine}
\end{eqnarray}
Following the above inequality, it is obvious that $\beta_b<\beta_t$ and writing the QBER of eq.(\ref{qber}) as $e_{edet}+\left(e_0-e_{det}\right)p^{dark}/p$; it follows then that $e_b>e_t$ ($e_b$ and $e_t$ are the QBERs for BB84 and ToM respectively). Hence the inequality (15) can be seen as a sufficient condition for the key rate of ToM to be greater than that of BB84. As an example, for GYS, we observe that at $\gamma= 0.21$ and $l=41$ km, $k>7.26$. The ratio of optimal intensity for ToM to BB84 at this distance is about $9$. Another example is for KTH parameters at 16 km, $k>2.1$ while the ratio observed is $4.6$.  

In the above calculations we have actually considered a perfectly efficient Alice. We now consider briefly the case when Alice's transmitivity, $\eta_A<1$. The sufficient condition would now be corrected to $k>\eta_A^{-1}10^{\frac{\gamma l}{10}}$. We plot in Figure \ref{fig:f3} and Figure \ref{fig:f4} several curves for varying transmitivity of Alice for GYS and KTH parameters respectively. 
\begin{figure}[h!]
	\centering
			\includegraphics[angle=270,width=0.6\textwidth]{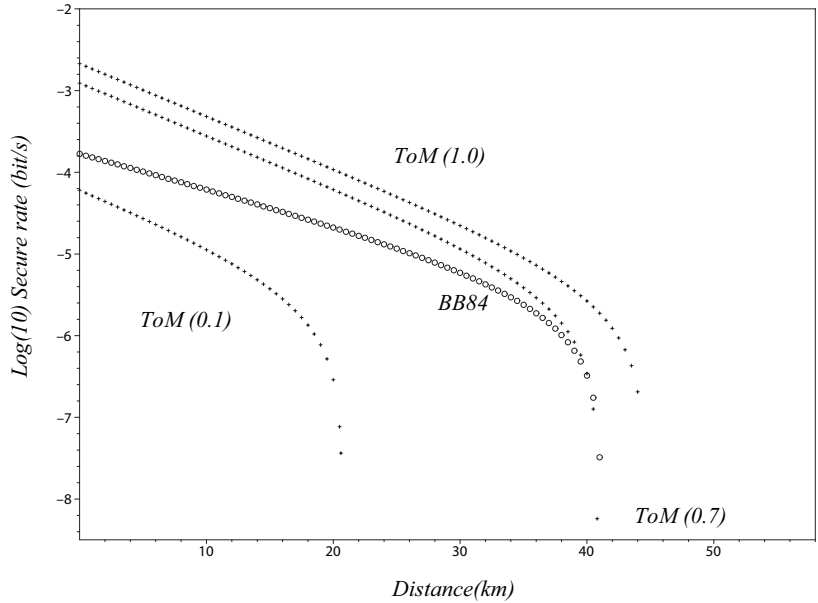}
		\label{3D}
	\caption{The above exhibits the secure key gain for ToM at $\eta_A=0.1,0.7$ and $1.0$ compared to BB84 for the parameters based on GYS.\label{fig:f3}}
\end{figure}
\begin{figure}[h!]
	\centering
			\includegraphics[angle=270,width=0.6\textwidth]{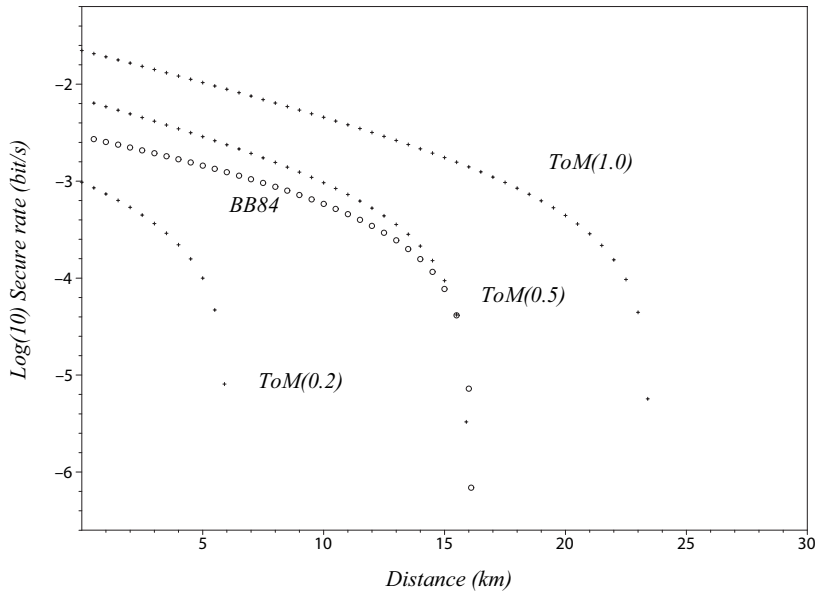}
		\caption{The above exhibits the secure key gain for ToM at $\eta_A=0.2,0.5$ and $1.0$ compared to BB84 for the parameters based on KTH.\label{fig:f4}}
\end{figure}
In consideration of the above parameters, the efficiency of Alice's equipments seems to require more attention for improvement when compared to the channel transmitivity.

\newpage
\section{Actual Photon Contributions \& Absence of PNS}
\noindent In this section we consider the case where Alice and Bob can actually determine the number of photons in a pulse contributing to the key (in an infinite decoy case for example; a simple study for decoy implementation of LM05 was made in \cite{jis}). In the absence of a PNS attack, the key rate for a BB84 is given by \cite{phd} 
\begin{eqnarray}
R\geq \dfrac{1}{2}\left\{-p_b f(e_b)h(e_b)+p_1[1-\tau(e_1)]\right\}
\label{bb84}
\end{eqnarray}
where $p_i=y_i\mu^i/i!$, $y_i=p^{dark}+\eta_i-p^{dark}\eta_i$ and $\eta_i=1-\left(1-\eta\right)^i$, $e_i=\left(e_0p^{dark}+e_{det}\eta_i\right)/y_i$ and $i$ refers to the number of photons in a pulse of concern.
As for ToM, we write
\begin{eqnarray}
R\geq \dfrac{1}{2}\left\{-p_b f(e_t)h(e_t)+p_1[1-\tau_{t}(e_1)]+p_2[1-\tau(e_2)]\right\}
\label{tom}
\end{eqnarray}
which, not surprisingly, is somewhat similar to the decoy formula for LM05 \cite{jis}. We note at passing that the calculations for eq.(17) include the relevant $\eta$ for two way channels. 
\begin{figure}[h]
	\centering
			\includegraphics[angle=270,width=0.6\textwidth]{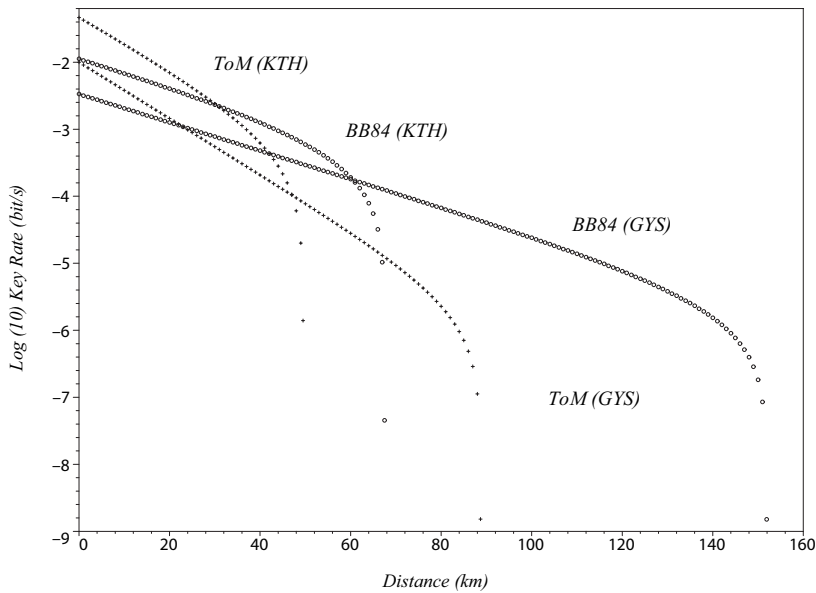}
	\caption{The above exhibits the secure key gain for ToM and BB84 based on eq.(16) and eq. (17) for GYS and KTH parameters.\label{fig:f5}}
\end{figure} 

The plot in Figure \ref{fig:f5} for the key rates of BB84 and ToM using eq. (16) and eq. (17) shows a more favorable picture for the former after about 20 km and 30 km for GYS and KTH parameters respectively. The reason for the advantage for ToM at shorter distances would mainly be due to double photon contributions. This is indeed a reminiscence of the SARG04 protocol which suggests a better key rate compared to BB84 though under decoy implementations (in the absence of PNS) performs otherwise \cite{scarev}. Relating the argument from \cite{scarev} for SARG04, we can say that ToM performs better than BB84 under PNS considerations. It would be interesting to imagine that two way QKD schemes may just be a natural protocol against PNS attacks while possibly maintaining a higher level of robustness against noise. 

\section{Two Way Quantum Cryptographic Scheme \& Continuous Variables}
\noindent In this short section, we should like to brief on a two way quantum cryptographic scheme in the framework of continuous variables (CV) as proposed in \cite{pir}. 
Although originally described in the language of finite dimensional Hilbert space, QKD has been explored in infinite dimensional Hilbert spaces in the framework of CV \cite{mh,tc,gr,gr2,we,lance}.  Encoding is done by amplitude modulation of coherent states with an independent pair of Gaussian variables, $Q$ and $P$ and decoding is done by measurements of the quadratures. Earlier proposals are, in philosophy not unlike its finite dimensional counterpart, i.e. as prepare and measure schemes. 

In \cite{pir}, a two way quantum cryptographic scheme in the continuous variable framework was proposed. This protocol sees Bob sending to Alice half an EPR pair, $B_2$ while keeping one of the modes, $B_1$ to himself. Alice would perform a Gaussian modulation by adding a stochastic amplitude to encode information before resending it to Bob who would then, together with $B_1$  resort to either a homodyne detection (disjoint measurement of either $Q$ or $P$ quadratures) or a heterodyne detection (joint measurements of $Q$ and $P$  quadratures). In analyzing the security of such a protocol, it was shown that Eve necessarily attacks both paths (channels) and given one mode Gaussian attacks, such a two way protocol provides improvement of the security threshold over one way protocols. This may be interpreted as having secure performance in a pair of channels of which individually would be too noisy for one way QKD. This effect is referred to by the authors as superadditivity and represents a definite advantage over one way QKD. Their proposal becomes essentially complete in the formulation of a hybrid protocol where Alice randomly chooses between a two way and a one way protocol, with the latter being identical to a prepare and measure scheme. The motivation for such a construction is as a measure against the most general collective attack on a two way scheme where Eve may perform an attack engendering correlation between both paths. Such correlations if exist, may be detected easily by the legitimate parties. Further studies have been made in \cite{pir2} where asymmetric Gaussian attacks between the two paths were considered and it was shown that the superadditive secure threshold holds. In \cite{pir3}, a specific class of individual attack using combinations of Gaussian cloning machines on one of the protocols proposed in \cite{pir} was analyzed.

\section{Conclusion}

\noindent In order to study what we believe to be the essentials of a two way QKD scheme (LM05) rigorously, we proposed a simple toy model, ToM. Beginning with an independent attack using a two dimensional ancilla, we proved a simple but relevant theorem and argue what the best independent attack should be for Eve. While an IR attack proves to be the more reasonable choice for concern, given an implementation with an imperfect photon source, we had had to resort to consider Eve's Renyi information gain for the double photon contribution case instead. We believe our use of key rate formulas and the like are more rigorously justified. We proceed to ascertain the secure key rates using BB84 based experimental parameters of GYS and KTH. 

Considerations for a completely efficient Alice shed a favorable light on ToM in comparisons against BB84 for both sets of parameters. This highlights an important feature of two way schemes, i.e. the inclusion of double photon contributions play a very significant role in key generation and allows for a higher key rate compared to BB84 despite the relatively extreme toll of lossy channels on ToM. We derived a sufficient condition for ToM's advantage over BB84 and later noted for the parameters above, the ultimate culprit seems to be the transmitivity of Alice's equipments. 

On the other hand, in the absence of a PNS attack, Alice's ability to ascertain the contributing photon yields results in ToM displaying lower key rates compared to BB84 after a certain distance. This is quickly compared to the case for SARG04, a protocol designed with the intention of combating the PNS. 

We hope to think that the toy model here would pronounce the interesting features of an actual two way QKD scheme like LM05. We believe the key rate formula for ToM is more pessimistic than that used or ought to be for a proper LM05 \cite{marco} especially given the appendage of the half factor term as well as the exclusion of three photon term. More importantly, in ToM, Eve's attack tends to leave her with equal amount of information about Alice encoding as well as Bob's measured state. This is not the case for LM05 which sees an asymmetry between the two, hence allowing for Alice and Bob to engage in a reverse reconciliation procedure \cite{arxivrev} that should in principle decrease the amount of information to be discarded in privacy amplification. A note worthy of mention is that the theory of two way QKD protocols boasts of higher elements of robustness against BB84 while previous studies in the face of lossy channels on the other hand have understandably suggested otherwise. Hence we believe our result should provide a breath of fresh air and effectively proposes for a more serious consideration of two way protocols. In the realm of CV quantum cryptography, promising results have been established in \cite{pir,pir2,pir3} spelling a definite advantage of two way protocols over its one way predecessor.    

\section{Acknowledgment}
One of the authors, J.S.S would like to acknowledge financial support under the project FRGS0510-122 from the Ministry of Higher Education's FRGS grant scheme and the university's Research Management Centre for their assistance and facilities provided.

\end{document}